\documentclass[%
 reprint,
superscriptaddress,
 amsmath,amssymb,
 aps,
pra,
]{revtex4-2}

\usepackage{graphicx}
\usepackage{subcaption}
\usepackage{dcolumn}
\usepackage{bm}
\usepackage[colorlinks=true,linkcolor=blue,citecolor=blue]{hyperref}

\usepackage{amsthm}
\usepackage{booktabs}
\usepackage{amsmath}
\usepackage{braket}
\usepackage{dsfont}
\usepackage[dvipsnames,table,xcdraw]{xcolor}
\usepackage{tikz-cd}
\usetikzlibrary{positioning}

\newtheorem{theorem}{Theorem}

\newcommand{\Tr}{\text{Tr}}

\newcommand{\chn}{\mathcal{E}}

\newcommand{\chnd}{\mathcal{F}} 

\newcommand{\pt}{p} 
\newcommand{\pest}[1][]{\hat{p}_{#1}} 
\newcommand{\rhoest}[1][]{\hat{\rho}_{#1}} 

\newcommand{\ret}[2]{\mathcal{R}_{#1,#2}}
\newcommand{\retd}{\ret{\chn}{\rf}} 
\newcommand{\rete}[2]{\mathcal{R}_{#1,#2}^\textsf{ext}}

\newcommand{\rf}{\gamma}
\newcommand{\rfe}{\Gamma}
\newcommand{\rfc}{\rho}

\newcommand{\psisd}{\ketbra{\psi_x}{\psi_x}}

\newcommand{\ketbra}[2]{|#1\rangle\langle#2|}

\DeclareMathOperator*{\argmin}{\arg\!\min}

\hypersetup{
    colorlinks=true,
    linkcolor=blue,
    filecolor=magenta,      
    urlcolor=blue,
    pdftitle={Overleaf Example},
    pdfpagemode=FullScreen,
    }

\begin{document}

\title{Tomographic Limits of the Petz Recovery Map}

\author{Peter Sidajaya}
\affiliation{Centre for Quantum Technologies, National University of Singapore, 3 Science Drive 2, 117543, Singapore}

\author{Clive Cenxin Aw}
\affiliation{Centre for Quantum Technologies, National University of Singapore, 3 Science Drive 2, 117543, Singapore}

\author{Mingxuan Liu}
\email{liu.mingxuan@u.nus.edu}
\affiliation{Centre for Quantum Technologies, National University of Singapore, 3 Science Drive 2, 117543, Singapore}

\author{Valerio Scarani}
\affiliation{Centre for Quantum Technologies, National University of Singapore, 3 Science Drive 2, 117543, Singapore}
\affiliation{Department of Physics, National University of Singapore, 2 Science Drive 3, Singapore 117542}

\date{\today}

\begin{abstract}
The Petz recovery map is considered one of the key candidates for the quantum analogue of Bayesian inference and Jeffrey's conditionalization. Since, there seems to be a natural connection between Bayesian inference and the notion of state tomography, it is natural to ask if the Petz recovery can be used for this latter task. In this paper, we discuss such recent results on iterated Petz recovery and relate them to Bayesian approaches to quantum state tomography. We highlight the limitations of direct Petz iteration and show how an extended Petz construction, by lifting the inference problem to a classical distribution over candidate quantum states, recovers the structure of Bayesian and maximum likelihood tomography. This provides perspective on the modifications or nuances required for a Petz approach to quantum retrodiction to perform quantum state tomography.
\end{abstract}

\maketitle

\tableofcontents

\section{Introduction}\label{sec:intro}

The modelling of states from measurement statistics is central to quantum science. This is the task of quantum state tomography (QST) \cite{QSTparis2004quantum,QSTenglert2025lecture,QSTcramer2010efficient}. Traditionally, QST involves taking the likelihood function, which gives the probability of observing the evidence given a certain estimator, and taking the estimator that maximises the function; this is the Maximum Likelihood Estimator (MLE) \cite{hradil1997quantum, QSTparis2004quantum,QSTenglert2025lecture,QSTcramer2010efficient}. Alternatively, one could also define Bayesian methods for tomography by specifying a probability function over the space of states and updating that probability with the observed evidence \cite{SchackBrunCaves2001,blume2010optimal}.

Bearing some resemblance to the task of tomography, we have \textit{retrodiction} \cite{BS21,AwBS, watanabe65,murk2026connecting,aw2026thesis}, where one makes an inference about an input given some observed evidence on the output of a process. In the classical setting, Bayesian retrodiction is given explicitly in Jeffrey's conditionalization, also known as Jeffrey's update \cite{zhou2014belief-BK,BJPsym21}. A popular choice for a quantum analogue to this is the Petz recovery map \cite{petzisking2022axioms,QPRPetzPaper,bai2025quantum,wilde-recov}. Crucially, in the Petz map, the prior is now a quantum state encoded as a density matrix, in contrast to Bayesian tomography methods, where it is a classical probability distribution over a state space. 

When conceptualised as such, retrodiction quite naturally invokes a connection to tomography. There is an intuition that iterated, rational inference on repeated evidence should lead us eventually to all that can be known about input that resulted in that evidence. Furthermore, in the informationally identifiable case, we should eventually arrive at the input itself.

Classically, this expectation is right---iterated Jeffrey's conditionalization on a non-singular channel, given a fixed repeated evidence does perform an inversion, regardless of the initialising prior so long as it has full support \cite{wasserman2004all-of-statistics,murk2026connecting}. However, in the quantum regime, Murk \textit{et al.} in \cite{murk2026connecting} showed this expectation does not hold generally when iterating the Petz recovery map. While it is the case that such an approach never decreases the likelihood of the estimator, there is no guarantee that the MLE is actually reached, unlike in the case of classical inference. Specifically, it is only guaranteed to achieve convergence to the MLE estimator whenever all rank-deficient states have lower likelihoods than the starting reference prior. We show later that such failure modes are common, especially in higher dimensions and when the target state is relatively pure.


Murk \textit{et al.}'s results in \cite{murk2026connecting} highlight where classical and quantum retrodiction depart with respect to state tomography. Here, we investigate the extent of this departure. In particular, we explain in some detail the protocols for which the Petz recovery map cannot be used for QST(Section \ref{sec:cannot}). After which, we present two protocols for which the Petz, under modification, \textit{can} be used for QST (Section \ref{sec:can}), noting their qualitative differences. After which, we make some concluding conceptual remarks (Section \ref{sec:concl}) about these results, especially with regards to making inferences in quantum theory. A summary of the results of this paper can be seen in Fig.~\ref{fig:schematic} and Table~\ref{tab:comparison}.

\begin{figure*}
    \centering
    \includegraphics[width=0.8\textwidth]{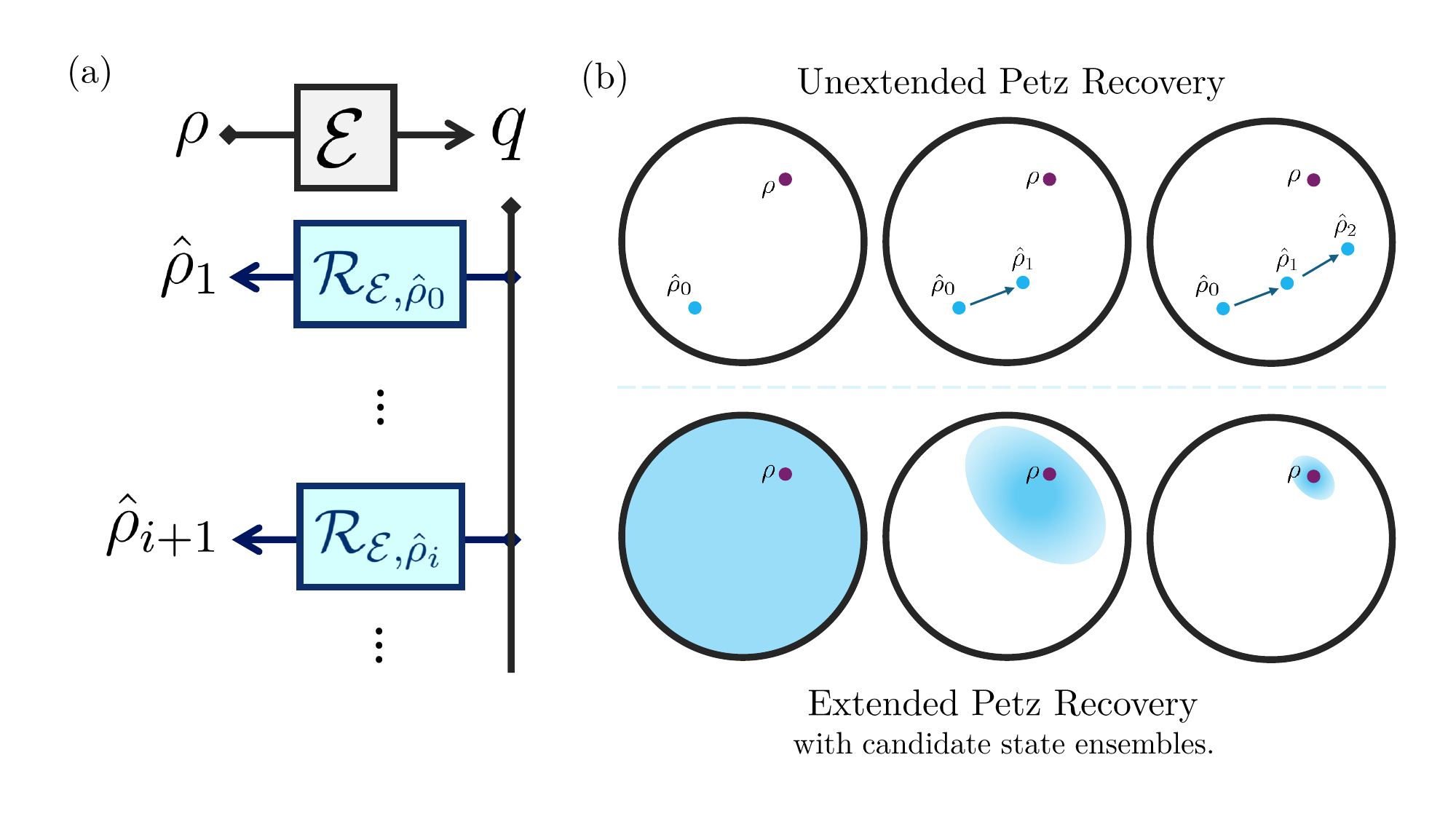}
    \caption{\textbf{(a)} \textit{The iterative (unextended) Petz protocol.} The unknown true state $\rho$ generates a classical evidence $q = \chn[\rho]$ through the measurement channel. The evidence is then retrodicted using some reference prior $\rhoest[0]$ to obtain the new updated estimator $\rhoest[1]$, which is used to perform another retrodiction in the following round. Tomography is considered successful when $\rhoest[i]$ converges to the true state $\rho$. \textbf{(b)} \textit{The reference prior in the unextended (Sec.~\ref{sec:cannot}) and the extended Petz map (Sec.~\ref{sec:can}).} In the unextended Petz map, the prior is a state in the Hilbert space, and at every iteration it moves. However, it may not always converge to the true state $\rho$. On the other hand, in the extended Petz map, the prior can be represented as a classical probability distribution (represented by the teal colouring) over the whole state space, which then shrinks towards the true state. This protocol is equivalent to standard Bayesian tomography, which reliably converges to the true state $\rho$.}
    \label{fig:schematic}
\end{figure*}

\begin{table*}[t]
\centering
\renewcommand{\arraystretch}{2}

\resizebox{0.95\textwidth}{!}{%
\begin{tabular}{lll}
\toprule
\textbf{Approach} & \textbf{Update} & \textbf{Works?} \\
\midrule
Unextended Classical
& Event-wise: $\displaystyle p_{i+1}(x)=p_i(x)\frac{\chn(y_i|x)}{\sum_{x'}p_i(x')\chn(y_i|x')}$
& No (Sec~\ref{ssec:event-unext})\\
& Batch: $\displaystyle p_{i+1}(x)=p_i(x)\sum_y q(y)\frac{\chn(y|x)}{\sum_{x'}p_i(x')\chn(y|x')}$
& Yes (Theorem~\ref{thm:c-s-conv}, Sec~\ref{ssec:safranek-qst})\\
\midrule
Unextended Petz
& Event-wise: $\displaystyle \rho_{i+1}=\frac{\sqrt{\rho_i}\,M_{y_i}\sqrt{\rho_i}}{\Tr[M_{y_i}\rho_i]}$
& No (Sec~\ref{ssec:event-unext}) \\
& Batch: $\displaystyle \rho_{i+1}=\sum_j q(j)\frac{\sqrt{\rho_i}M_j\sqrt{\rho_i}}{\Tr[M_j\rho_i]}$
& No (Murk et al. \cite{murk2026connecting}, Sec~\ref{ssec:safranek-qst})\\
\midrule
Extended Petz (CSE)
& Event-wise: $\displaystyle p_{i+1}(x)=p_i(x)\frac{\Tr[M_{y_i}\rho_x]}{\Tr[M_{y_i}\rhoest[i]]}$
& Yes, equivalent to Bayesian tomography (Sec~\ref{ssec:event-ext})\\
& Batch: $\displaystyle p_{i+1}(x)=p_i(x)\sum_j q(j)\frac{\Tr[M_j\rho_x]}{\Tr[M_j\rhoest[i]]}$
& Yes (Sec~\ref{ssec:batch-ext})
\\
\bottomrule
\end{tabular}%
}

\caption{A summary of our paper: the comparison of event-wise and batch updates in unextended classical inference, unextended Petz recovery, and extended Petz recovery with a candidate state ensemble (CSE). Here $q(j)=\Tr[M_j\rho]$. For the event-wise approach, the evidence come as individual samples from $q(j)$, labelled as $y_i$ here. For the batch approach, the evidence is $q(j)$ itself.}
\label{tab:comparison}

\end{table*}

\section{The Unextended Petz Map \\ Does Not Perform State Tomography} \label{sec:cannot}
\subsection{Preliminaries}
We begin by defining the formalism and tools we will need in this section. Since comparisons to Bayes' rule (and Jeffrey's update) will become crucial when understanding the Petz recovery, we define them here in the category of finite stochastic maps. For the classical regime, forward channels $\tilde\chn$ are given by stochastic matrices with conditional probabilities $\tilde\chn(y|x)$, with the input $x$ and output $y$ indices living in their respective state spaces, $x \in X$ and $y\in Y$.  The Bayesian inversion of such a channel is defined by applying Bayes' rule,
\begin{equation}\label{eq:classical-retrodiction}
    \ret {\tilde\chn} {\rf} (x|y):=\frac{\tilde\chn(y|x)\rf(x)}{[\tilde\chn\rf](y)}.
\end{equation}
Here, $\rf(x)$ denotes the reference prior necessarily invoked in any Bayesian inference. Additionally, to avoid division by zero we assume that $\tilde\chn$ has positive entries. $\tilde\chn\rf$ refers to the propagation of a distribution $\rf(x)$ through $\tilde\chn$. That is, $[\tilde\chn\rf](y)=\sum_x \tilde\chn(y|x)\rf(x)$. Now if one has a prior $\pest[i-1](x)$ pertaining to the input state, and observes an evidence distribution $q_{i+1}(y)$ that is on the output of $\chn$, one may use Jeffrey's conditionalization to update $\pest[i-1](x)$ to some new state $\pest[i](x)$. This is given by propagating some output evidence $q(y)$, through the Bayesian inverted map $\ret {\tilde\chn} {\rf}$: 

\begin{align}\label{eq:c-s-iterative}
    \pest[i](x) &= \sum_y \ret {\tilde\chn}{\pest[i-1]}(x|y)q_{i}(y) \\ 
    &= \hat{p}_{i-1}(x)\sum_y \frac{\tilde\chn(y|x)}{[\tilde\chn\hat{p}_{i-1}](y)} q_{i}(y)\label{eq:jeff-upd}
\end{align}

With this, consider the quantum regime. Here, the Petz recovery map takes a role analogous to Bayes' rule \cite{petzisking2022axioms,QPRPetzPaper,AwBS,liu2026-extendedpetz,bai2025quantum}. Given a forward process defined by a completely positive trace-preserving (CPTP) map, which we denote here as $\chnd:A \to A'$, and a reference state $\rf \in A$ (now, a density operator), the Petz map \cite{petz,petz1,wilde_2013} is given by
\begin{equation}\label{eq:petz-map}
    \ret\chnd\rf [\omega]=\sqrt{\rf}\chnd^\dag\left[\frac{1}{\sqrt{\chnd[\rf]}}\omega\frac{1}{\sqrt{\chnd[\rf]}}\right]\sqrt{\rf}.
\end{equation}

Here, we have a kind of quantum Jeffrey's conditionalization with $\omega \in A'$ as a quantum theoretic evidence on the output space used to update $\rf$ to $\ret\chnd\rf[\omega]$. In order to emphasize the similarities across the regimes, we have chosen to have some overlap in notation, with some labels used for classical and quantum objects playing the same conceptual role (i.e. $\mathcal{R}$ for the object corresponding to retrodiction, $\rf$ for one's reference prior). They will be disambiguated by context.  

Since we focus on the task of tomography, we employ the following conventions with respect to \textit{measurement} channels. Let the measurements be described with a set of POVM elements ${M_j}$, where $\sum_j M_j = \mathbb{I}$. Then, we may write the measurement channel simply as
\begin{equation}\label{eq:mchn-povm}
    \chn[\rho] = \sum_j \Tr\big[M_j\rho\big] \ketbra{j}{j}.
\end{equation}
This is a quantum-to-classical channel, since the output is composed by a diagonal classical register. With this defined, we have the adjoint as follows,
\begin{equation}
    \chn^\dagger [\cdot] = \sum_j M_j \bra{j} \cdot \ket{j}, 
\end{equation}
and thus, we also have the corresponding Petz recovery on a classical output, expressed here as an operator $q=\sum_yq(y) \ketbra{y}{y}$ \cite{fuchs2003quantum,liu2025retrodictive}:
\begin{equation} \label{eq:ret-mchn-povm}
    \retd[q]=\sum_{j} q(j) \frac{\sqrt{\gamma}M_j\sqrt{\gamma}}{\Tr(M_j\gamma)}.
\end{equation}
Of particular interest in tomography are informationally complete POVMs (IC-POVM), which are POVMs that have at least $d^2$ elements and span the whole space of Hermitian operators. In the usual well-behaved scenarios, the use of IC-POVM in tomography leads to a unique MLE estimator.

With this, one can take two approaches to attempting tomography via the Petz recovery map. We go through both now, showing that neither of them give us a generally sound route to quantum state tomography.


\subsection{Event Approach}\label{ssec:event-unext}
The first is a na\"{i}ve application of Jeffrey's update that is straightforward to exclude for tomography. This failure will be of note for later discussions, so we describe it in some detail here. In this approach, one sends copies of a quantum state $\rho$ from some source through a measurement channel $\chn$. Singular click events of the classical state indexed in $j$ are registered. Accordingly, for $i \in \mathbb{Z}^{>0}$ this is used to update the prior $\rhoest[i-1]$ to $\rhoest[i]$ at every $i$-th event. With the Petz recovery, then, if we get $[q_i]=k$ (i.e. $q_i = \ketbra kk$) at the $i$-th click on the classical register, we have:
\begin{equation}
    \rhoest[i] \overset{[q_i]=k}{=}  \ret\chn{\rhoest[i-1]}\big[\ketbra{k}{k}\big], \quad k  \sim P(k|\rho,\chn)=\Tr\big[M_k\rho\big]
\end{equation}
$M_k$ being the $k$-th POVM for $\chn$ as per \eqref{eq:mchn-povm}. The hope is that as $i$ gets larger, one converges to some kind of MLE for $\rho$, and that for informationally complete $\chn$ we get $\rhoest[N] \to \rho$ for large $N$. 

It is easy to show that this method will simply fail, not only in quantum, but also classically. Consider the equivalent classical problem, which we include here briefly, where we have a state $p$ that enters a channel $\chn$, and what we receive are samples from $q=\chn p$. Now, the updates are given by
\begin{equation}
    \pest[N](x) = \frac{\pest[0](x)\prod_i^N \chn(q_i|x)}{\sum_{x'}\pest[0](x')\prod_i^N \chn(q_i|x')}.
\end{equation}
Now, note that for injective channels, a distribution that is invariant under every possible event-wise update must be deterministic, $\pest[N](x)=\delta_{x,k}$. Thus, supposing that the update does converge, it will only converge to these deterministic distributions, and these are generally very poor estimators for the target state $p$. Bringing this back to quantum, only pure states are fixed points. This can be seen from Eq.~\eqref{eq:ret-mchn-povm}, where forcing $\rhoest[i+1]=\rhoest[i]$ requires $\rhoest[i+1]\propto\sqrt{\rhoest[i]}M_j\sqrt{\rhoest[i]}$ for all $j$. When the POVMs are IC, this can only be possible when $\rhoest[i]$ is pure.

In particular, we have treated our belief as a particular \textit{state} $p\in\Delta$. However, a more appropriate use of Bayesian inference instead encodes our belief as a distribution over the simplex $\Delta$ (see Appendix \ref{app:coin} for the biased coin example), which foreshadows our Section~\ref{sec:can}.

\subsection{Batch Approach}\label{ssec:safranek-qst}
Having settled that event-wise approach to the Petz map does not work, we may consider the other option. Here, we send the evidence as measurement frequencies from the population, as opposed to singular incoming samples \footnote{For completeness, one can consider a streaming version of this scenario. We may think of $\mathbf{N}_i = (N_{i,0},N_{i,1},\ldots,N_{i,m})$ as the accumulated counts of the observed outcomes up to the $i$-th event. Then, ${\hat q_i = \mathbf{N}_i/( \Sigma_{j}  N_{i,j})}$
denotes the corresponding empirical distribution. Upon observing a new click $k$, counts are updated, $\mathbf{N}_{i+1} = \mathbf{N}_i + \mathbf{e}_k$,
where $\mathbf{e}_k$ is a pure vector for $k$. Thus, writing $N_i = \Sigma_j N_{i,j}$, the empirical distribution evolves as
\begin{equation}
    \hat{q}_{i+1}
    = \frac{N_i}{N_i+1}\hat{q}_i
    + \frac{1}{N_i+1}\delta_k.
\end{equation}
$\hat q_{i+1}$ may then be converted into a density operator $q$ that can be fed into the Petz recovery given by \eqref{eq:ret-mchn-povm}. However, by the law of large numbers \cite{wasserman2004all-of-statistics}, the long run asymptotic behaviour of retrodicting on such streaming, accumulative evidence is the same as retrodicting on a fixed evidence characterized by $q=\chn[\rho]$.}. Here, to simplify the analysis, we will only consider the behaviour of the protocol in an asymptotic regime (i.e. $q = \chn[\rho]$), such that we would not need to worry about sampling noise. In this regime, the state with maximum likelihood, or simply the MLE state, will coincide with the `true' unknown state when the channel $\chn$ is injective.

Now, as observed by Murk \textit{et al.} \cite{murk2026connecting}, convergence to a genuine MLE may not occur under such a setting, and so this protocol is also generally unfit for tomography. In particular, when we have
\begin{equation}\label{q-s-iterative}
    \rhoest[i] = \ret{\chn}{\rhoest[i-1]}[q],
\end{equation}
defined for $i \in \mathbb Z^{>0}$ and some initializing $\rhoest[0]$. Then, \cite{murk2026connecting} showed two crucial theorems:
\begin{theorem}
    \cite{murk2026connecting} The iteration of the Petz map of the measurement channel $\chn$ with the estimator at time $i$, the prior $\rhoest[i]$ never decreases the likelihood:
    \begin{equation}
        \mathcal{L}(\rhoest[i]) \geq \mathcal{L}(\rhoest[i-1]),
    \end{equation}
    with equality if and only if $\rhoest[i]$ is a fixed point.
\end{theorem}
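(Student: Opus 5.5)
The plan is to reduce the statement to a classical inequality about a symmetric kernel built from $\rhoest[i-1]$ and the POVM, and then prove that inequality by a convexity argument. Throughout, write $\sigma:=\rhoest[i-1]$ and take $\mathcal{L}(\sigma)=\sum_j q(j)\log \Tr[M_j\sigma]$. Set $p_j:=\Tr[M_j\sigma]>0$ and $r_j:=q(j)/p_j$. By Eq.~\eqref{eq:ret-mchn-povm} the update is $\sigma'=\rhoest[i]=\sqrt\sigma R\sqrt\sigma$ with $R=\sum_k r_k M_k$. Hence
\begin{equation}
p'_j:=\Tr[M_j\sigma']=\sum_k K_{jk}\, r_k,\qquad K_{jk}:=\Tr\big[\sqrt\sigma M_j\sqrt\sigma M_k\big].
\end{equation}
The kernel $K$ has the following properties:
\begin{itemize}
\item $K$ is real and symmetric, since the trace is cyclic.
\item Its entries are nonnegative, because $K_{jk}=\Tr[(\sigma^{1/4}M_j\sigma^{1/4})(\sigma^{1/4}M_k\sigma^{1/4})]$ is the trace of a product of two positive operators.
\item Its row sums are $\sum_k K_{jk}=\Tr[\sqrt\sigma M_j\sqrt\sigma]=p_j$, because $\sum_k M_k=\mathbb{I}$.
\end{itemize}
Define $w_{jk}:=K_{jk}/p_j$, which is a stochastic matrix, and $s_j:=\sum_k w_{jk}r_k$. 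Then the claim $\mathcal L(\sigma')\ge\mathcal L(\sigma)$ becomes the purely classical statement
\begin{equation}
\Delta:=\sum_j q(j)\log s_j\;\ge\;0 ,
\end{equation}
which must hold for every symmetric nonnegative $K$ with row sums $p$ and every probability vector $q$.

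Next I will bring in the structure of $K$. Symmetry makes $K_{jk}/\sum_{lm}K_{lm}$ a joint distribution whose two marginals both equal $p$. This gives the identities $\sum_j p_j s_j=\sum_k p_k r_k=1$, and more importantly it lets me exchange the roles of $j$ and $k$. I would then apply Jensen's inequality in the EM/Baum--Eagon style. Write $q(j)\log s_j=q(j)\log\sum_k w_{jk}r_k$ and use the posterior-type weights $u_{jk}=w_{jk}r_k/s_j$. The variational form $\log s_j=\max_{u}\sum_k u_k\log(w_{jk}r_k/u_k)$ then bounds $\Delta$ from below by an expression in which symmetry of $K$ can be used to pair the $(j,k)$ and $(k,j)$ terms.

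The target is to rewrite the lower bound as a sum of two relative entropies, for instance $D(q\|p)$ plus a mixed term, which together are nonnegative. If the paired terms do not close this way, I would try the second-order route instead. Symmetry of $K$ implies that $\sum_{j,k}K_{jk}(r_j-r_k)(\log r_j-\log r_k)\ge0$. The idea is to combine this with concavity of $\log$ applied to $s_j$, starting from the comparison point $\log r_j$.

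The equality case comes out of the same argument. Equality in Jensen's inequality forces $r_k$ to be constant on the support of each row $w_{j\cdot}$. In that case $R$ acts as the identity on the support of $\sigma$, so $\sqrt\sigma R\sqrt\sigma=\sigma$, which means $\sigma$ is a fixed point. Conversely, at a fixed point we have $p'=p$, so $\Delta=0$.

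The main obstacle is the inequality $\Delta\ge0$ itself. A naive Jensen bound gives $\sum_{jk}K_{jk}r_j\log r_k$, and this differs from $D(q\|p)\ge0$ by a term of the wrong sign. The symmetry of $K$, which is the genuinely quantum input through $\sqrt\sigma(\cdot)\sqrt\sigma$, therefore has to be used more cleverly than that. If the classical route stalls, my fallback is operator-level. I would write $\mathcal L(\sigma')-\mathcal L(\sigma)=\sum_j q(j)\log\big(\Tr[M_j\sqrt\sigma R\sqrt\sigma]/\Tr[M_j\sigma]\big)$ and apply operator concavity of $\log$, or an Araki--Lieb-type trace inequality, to compare $\sqrt\sigma R\sqrt\sigma$ with $\sigma$ along the positive operator $R$, whose $\sigma$-expectation is $\Tr[\sigma R]=1$.
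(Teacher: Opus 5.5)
The paper gives no proof of this theorem. It quotes it from \cite{murk2026connecting}, so your argument has to stand on its own, and its central reduction fails. You assert that $\Delta=\sum_j q(j)\log s_j\ge 0$ must hold for \emph{every} real symmetric, entrywise nonnegative $K$ with row sums $p$. That is false. Take two outcomes, $p=(\tfrac12,\tfrac12)$, $K=\tfrac12\begin{pmatrix}0&1\\1&0\end{pmatrix}$ and $q=(\tfrac34,\tfrac14)$. Then $r=(\tfrac32,\tfrac12)$, $s=(\tfrac12,\tfrac32)$, and $\Delta=\tfrac14\log\tfrac{3}{16}<0$.

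More generally, take uniform $p$, $w=\begin{pmatrix}1-\beta&\beta\\ \beta&1-\beta\end{pmatrix}$ and $q=(\tfrac{1+u}{2},\tfrac{1-u}{2})$. Then $\Delta=\tfrac{1+u}{2}\log(1+cu)+\tfrac{1-u}{2}\log(1-cu)$ with $c=1-2\beta$. This is nonnegative for all $u$ if and only if $c\ge 0$, i.e.\ if and only if $K\succeq 0$. So any argument that uses only symmetry, nonnegativity and the row sums cannot close. That includes pairing the $(j,k)$ and $(k,j)$ terms after an EM-type Jensen step, and the identity $\sum_{jk}K_{jk}(r_j-r_k)(\log r_j-\log r_k)\ge0$.

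The missing ingredient is one you wrote down but used only for nonnegativity. $K_{jk}=\Tr[(\sigma^{1/4}M_j\sigma^{1/4})(\sigma^{1/4}M_k\sigma^{1/4})]$ is a Hilbert--Schmidt Gram matrix, so $K\succeq 0$. Equivalently, the round trip $\chn\circ\mathcal{R}_{\chn,\sigma}$ has spectrum in $[0,1]$. Even this does not immediately give a classical proof. The EM/Jensen argument needs a nonnegative hidden-variable factorisation $K_{jk}=\sum_a E(j|a)E(k|a)\pi(a)$, which exists for instance when the $\sigma^{1/4}M_j\sigma^{1/4}$ commute. A Gram matrix of positive operators need not admit one in general, so a genuinely quantum step is required.

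Your operator-level fallback does not supply that step as stated. The natural use of concavity of $\log$ is Jensen with the posterior states $N_j=\sqrt\sigma M_j\sqrt\sigma/p_j$. Since $p'_j/p_j=\Tr[N_jR]$ and $\sum_j q(j)N_j=\sigma'$, this gives $\Delta\ge \Tr[\sqrt\sigma R\sqrt\sigma\log R]$.
\begin{itemize}
\item In the commuting case this equals $\Tr[\sigma R\log R]\ge (\Tr[\sigma R])\log\Tr[\sigma R]=0$, which is exactly how classical EM closes.
\item The Petz ordering breaks this. At a pure $\sigma=|\psi\rangle\langle\psi|$ the bound equals $\langle\psi|\log R|\psi\rangle<\log\langle\psi|R|\psi\rangle=0$, unless $|\psi\rangle$ is an eigenvector of $R$.
\item By continuity the bound is therefore negative for nearly pure full-rank $\sigma$, while the true $\Delta$ tends to $0$ there.
\end{itemize}
An appeal to an unspecified Araki--Lieb-type inequality is not yet an argument.

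Finally, your equality case inherits the gap. Even granting a Jensen-type bound, equality only forces $r$ to be constant on each connected block of $K$. To conclude $\sqrt\sigma R\sqrt\sigma=\sigma$ you must still show that every block constant equals $1$, and that needs a further argument.
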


\begin{theorem}
    \cite{murk2026connecting} Suppose that $\{M_k\}$ is a tomographically complete measurement with all observed probabilities $\{q_i(k)\}$ positive. We denote the unique maximizer of the log-likelihood $\rhoest[MLE]$. For some initial prior $\rhoest[0]$. Then $\rhoest[i] \rightarrow \rhoest[MLE]$ or $\det(\rhoest[i]) \rightarrow 0$. If the upper level set $\{\rhoest : \mathcal{L}(\rhoest) \geq \mathcal{L}(\rhoest[0])\}$ contains only invertible states, then it will converge to $\rhoest[MLE]$.
\end{theorem}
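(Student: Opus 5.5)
Let $\mathcal L(\sigma)=\sum_k q(k)\log\Tr[M_k\sigma]$ and $T(\sigma)=\ret{\chn}{\sigma}[q]=\sqrt{\sigma}R(\sigma)\sqrt{\sigma}$ with $R(\sigma)=\sum_k q(k)M_k/\Tr[M_k\sigma]$. The approach is a Lyapunov/LaSalle argument, with $\mathcal L$ as the Lyapunov function supplied by the previous theorem.

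\textbf{Step 1: setup.} The state space is compact and $\mathcal L$ is continuous wherever all $\Tr[M_k\sigma]>0$. On the upper level set $S_0=\{\sigma:\mathcal L(\sigma)\ge\mathcal L(\rhoest[0])\}$ every $\Tr[M_k\sigma]$ is bounded below, since $\mathcal L$ would otherwise be $-\infty$. Hence $S_0$ is compact, and $T$ is continuous on $S_0$. By the previous theorem $T(S_0)\subseteq S_0$, and $\mathcal L(\rhoest[i])$ is nondecreasing and bounded above, so it converges to some limit $\mathcal L^*$.

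\textbf{Step 2: limit points are fixed points.} Let $\sigma^*$ be a limit point of a subsequence $\rhoest[i_n]\to\sigma^*$. By continuity, $\mathcal L(T(\sigma^*))=\lim\mathcal L(\rhoest[i_n+1])=\mathcal L^*=\mathcal L(\sigma^*)$. The equality case of the previous theorem then forces $T(\sigma^*)=\sigma^*$.

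\textbf{Step 3: invertible fixed points are the MLE.} If $\sigma^*$ is invertible, then $\sqrt{\sigma^*}R\sqrt{\sigma^*}=\sigma^*$ gives $R(\sigma^*)=\mathbb I$. This is exactly the KKT stationarity condition for maximizing the concave function $\mathcal L$ over density matrices at an interior point: $\Tr[R(\sigma^*)(\tau-\sigma^*)]=0$ for all states $\tau$. So $\sigma^*$ is a global maximizer, and by tomographic completeness with $q(k)>0$ (which makes $\mathcal L$ strictly concave), $\sigma^*=\rhoest[MLE]$.

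\textbf{Step 4: the dichotomy.} Two cases remain.
\begin{itemize}
\item If some limit point is the MLE, then $\mathcal L^*=\mathcal L_{\max}$. Every limit point then has maximal likelihood, hence equals $\rhoest[MLE]$ by uniqueness. Compactness then yields convergence of the whole sequence.
\item Otherwise every limit point is a rank-deficient fixed point. Since $\det$ is continuous, every subsequential limit of $\det\rhoest[i]$ is $0$, so $\det\rhoest[i]\to0$.
\end{itemize}
For the final claim, if $S_0$ contains only invertible states, then all limit points (which lie in the closed set $S_0$) are invertible. Step 3 makes them equal to $\rhoest[MLE]$, and compactness gives convergence.

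\textbf{Main obstacle.} The delicate step is Step 2, specifically that the previous theorem's equality case must hold at the limit point. This needs $T$ continuous at $\sigma^*$, including when $\sigma^*$ is singular: $\sqrt{\sigma}$ is continuous and $R$ is continuous on $S_0$, so this should be fine. I would also double-check the KKT identification in Step 3, namely that $R=\mathbb I$ on full support gives a global maximum. This follows from concavity: $\mathcal L(\tau)-\mathcal L(\sigma^*)\le\Tr[R(\sigma^*)(\tau-\sigma^*)]=0$.
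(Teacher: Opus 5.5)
The paper does not prove this statement itself; it imports it verbatim from Murk \emph{et al.}, so there is no in-paper argument to compare against. Judged on its own, your proof is correct. It is the standard global-convergence (Zangwill/LaSalle) argument: the dichotomy reduces to the monotonicity theorem together with strict concavity of $\mathcal{L}$. The following all hold as written:
compactness of $S_0$ and continuity of $T$ on it; the fixed-point property of limit points; the step from $\sqrt{\sigma^*}R\sqrt{\sigma^*}=\sigma^*$ to $R(\sigma^*)=\mathbb{I}$ for invertible $\sigma^*$; and the case split. The concavity bound you need in Step 3 is just $\log x\le \log y+(x-y)/y$, summed with weights $q(k)$.

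Two small points deserve a sentence in a write-up.
\begin{enumerate}
\item You need $\mathcal{L}(\rhoest[0])>-\infty$, e.g.\ a full-rank initial prior. Otherwise neither the iteration nor your lower bound on $\Tr[M_k\sigma]$ over $S_0$ is available.
\item You use the equality clause of the monotonicity theorem at a possibly singular $\sigma^*$, in the form ``equality iff the input is fixed''. The paper phrases it for the output $\rhoest[i]$. Either reading suffices. Since $R(\sigma)\ge(\min_k q(k))\,\mathbb{I}$, the map $T$ preserves support. Moreover, $T(\sigma^*)=\lim\rhoest[i_n+1]$ is itself a limit point with likelihood $\mathcal{L}^*$, so applying Step 3 to $T(\sigma^*)$ gives the same dichotomy.
\end{enumerate}

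Your route also shows where the quantum case departs from the classical one, which the bare citation does not. Invertibility enters only in Step 3. At a singular fixed point with support projector $P$, the fixed-point equation gives only $PRP=P$, not the actual optimality condition $R(\sigma^*)\le\mathbb{I}$. That gap is what leaves room for the rank-deficient failure mode the paper documents numerically.
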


That is, the iterative Petz map for measurement process is indeed monotonic in likelihood with respect to the data, but it may fail by converging to a rank-deficient state instead of the MLE estimator. In contrast to the event approach, the failure in this batch evidence setting is more surprising, because the classical version of the batch approach \textit{does} converge to the MLE estimator (and thus to the true state $p$ if the map is injective).

\begin{theorem}\label{thm:c-s-conv}
    In the limit of many iterations, and given a full-support initial prior $\pest[0]$, Jeffrey's conditionalization \eqref{eq:jeff-upd} makes the estimator $\pest[i]$ converge to the MLE estimator. In particular, for a given injective map $\chn$ with all positive entries, actual input $\pt$, and $q(y)=[\chn p](y)$:
    \begin{equation}
        \lim_{i\rightarrow\infty}\pest[i](x)=\pt(x).
    \end{equation}
\end{theorem}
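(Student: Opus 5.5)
We view the iteration \eqref{eq:jeff-upd} as the classical EM (Richardson--Lucy) algorithm for maximizing the log-likelihood
\begin{equation}
    \mathcal{L}(p)=\sum_y q(y)\log[\chn p](y)
\end{equation}
over the simplex $\Delta$. The plan has three parts: monotonicity of $\mathcal{L}$, identification of the limit points, and a separate argument that the limit cannot be a face of the simplex. The last part is exactly where the quantum iteration of \cite{murk2026connecting} fails.

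\textbf{Step 1 (monotonicity).} Write the update as $\pest[i](x)=\pest[i-1](x)\,g_{i-1}(x)$ with $g_{i-1}(x)=\sum_y q(y)\chn(y|x)/[\chn\pest[i-1]](y)$. By Jensen's inequality (concavity of $\log$) applied to the posterior weights $\ret{\chn}{\pest[i-1]}(x|y)$,
\begin{equation}
    \mathcal{L}(\pest[i])-\mathcal{L}(\pest[i-1])=\sum_y q(y)\log\sum_x \ret{\chn}{\pest[i-1]}(x|y)\,g_{i-1}(x)\ \ge\ \sum_x \pest[i](x)\log g_{i-1}(x).
\end{equation}
The right-hand side equals $D(\pest[i]\|\pest[i-1])\ge0$. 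Equality holds only if $\pest[i]=\pest[i-1]$. Since $\mathcal{L}\le0$ and is nondecreasing, we get $D(\pest[i]\|\pest[i-1])\to0$, so successive iterates become close. Any limit point $p^*$ is therefore a fixed point of the (continuous, because $\chn$ has positive entries) update map. Fixed points satisfy $g(x)=1$ on $\operatorname{supp}p^*$.

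\textbf{Step 2 (the key Lyapunov function).} To exclude convergence to a boundary fixed point, I would use the KL divergence from the true state, $V_i=D(\pt\|\pest[i])=\sum_x\pt(x)\log(\pt(x)/\pest[i](x))$. We have $V_i-V_{i+1}=\sum_x\pt(x)\log g_i(x)$. By the log-sum/Jensen inequality, with $[\chn\pt]=q$,
\begin{equation}
    \sum_x \pt(x)\log g_i(x)\ \ge\ \sum_y q(y)\log\frac{q(y)}{[\chn\pest[i]](y)} = D(q\|\chn\pest[i])\ge0,
\end{equation}
where the middle step writes $g_i(x)=\sum_y \chn(y|x)\,q(y)/[\chn\pest[i]](y)$ and uses the inner weights $\chn(y|x)\pt(x)/q(y)$. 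I expect this inequality to be the main technical point, and I would check it carefully via the standard EM/Csiszár--Tusnády alternating-minimization identity. Given it, $V_i$ is nonincreasing and finite because $\pest[0]$ has full support. Hence $\pest[i](x)$ stays bounded away from zero wherever $\pt(x)>0$. Summing the inequality also gives $\sum_i D(q\|\chn\pest[i])\le V_0<\infty$, so $\chn\pest[i]\to q$.

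\textbf{Step 3 (conclusion).} Every limit point $p^*$ satisfies $\chn p^*=q=\chn\pt$. Injectivity of $\chn$ forces $p^*=\pt$, and compactness of $\Delta$ then gives $\pest[i]\to\pt$. The same argument, with $\pt$ replaced by any MLE, shows convergence to the MLE set in the non-realizable case; this proves the first sentence of the theorem.

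The contrast with the quantum case is worth noting. There the analogue of $V_i$, a relative entropy $D(\rho\|\rhoest[i])$ or its variants, does not telescope through the Petz update, because $\sqrt{\rhoest}M_j\sqrt{\rhoest}$ does not factor multiplicatively. This is why only monotonicity (Step 1) survives in \cite{murk2026connecting}.
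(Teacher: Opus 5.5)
Your argument is correct for the precise (realizable, injective) claim, but it takes a different route from the paper.

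\textbf{The paper's route.} The paper recasts each Jeffrey update as two alternating I-projections. The E-step projects onto $\mathcal{P}=\{J:\sum_x J(x,y)=q(y)\}$ and the M-step onto $\mathcal{Q}=\{\chn(y|x)p(x)\}$. It identifies $D(P_{n+1}\|Q_n)=\sum_y q(y)\log q(y)-\mathcal{L}(q|\pest[n])$ and invokes the Csisz\'ar--Tusn\'ady theorem as a black box to get that the likelihood tends to its supremum. Realizability makes that supremum equal to $\sum_y q\log q$, so $\chn\pest[n]\to q$, and injectivity finishes.

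\textbf{Your route.} You use the standard Lyapunov argument for multiplicative EM-type updates: $D(\pt\|\pest[i])$ drops by at least $D(q\|\chn\pest[i])$ per step. This buys several things:
\begin{itemize}
\item It needs only Jensen's inequality, no external theorem.
\item It makes the role of the full-support hypothesis transparent, as the finiteness of $V_0$. In the paper this is hidden in the condition $\mathcal{P}_0=\mathcal{P}$.
\item It spells out the compactness step that the paper compresses into ``if $\chn$ is injective, $\pest[n]\to\pt$''.
\item It gives a rate. Step 1 is not needed for convergence itself, but it shows that $D(q\|\chn\pest[i])=\sum_y q\log q-\mathcal{L}(\pest[i])$ is nonincreasing. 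Combined with summability, this yields $D(q\|\chn\pest[N])\le D(\pt\|\pest[0])/(N+1)$.
\end{itemize}
In exchange, the paper's route exhibits the E-step/M-step structure explicitly. It also delivers convergence of the likelihood to its supremum in the non-realizable case at no extra cost.

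\textbf{Two small corrections.}
\begin{itemize}
\item In Step 2, the Jensen step does not use the posterior weights $\chn(y|x)\pt(x)/q(y)$. For each fixed $x$ it uses the weights $\chn(y|x)$ over $y$, which sum to one because $\chn$ is stochastic. This gives $\log g_i(x)\ge\sum_y\chn(y|x)\log\bigl(q(y)/[\chn\pest[i]](y)\bigr)$. Averaging over $\pt(x)$ and using $\chn\pt=q$ then yields $D(q\|\chn\pest[i])$ exactly. Your displayed inequality is right; only the description of the weights is off.
\item Your remark that ``the same argument'' with $\pt$ replaced by an MLE $\bar p$ handles the non-realizable case is not literally true. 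The same Jensen step produces $\sum_y[\chn\bar p](y)\log\bigl(q(y)/[\chn\pest[i]](y)\bigr)$, whose sign is uncontrolled when $\chn\bar p\neq q$. You need the KKT conditions $\sum_y q(y)\chn(y|x)/[\chn\bar p](y)=1$ on $\operatorname{supp}\bar p$. These let you reweight with $\pi_x(y)=q(y)\chn(y|x)/[\chn\bar p](y)$, which gives $D(\bar p\|\pest[i])-D(\bar p\|\pest[i+1])\ge\mathcal{L}(\bar p)-\mathcal{L}(\pest[i])\ge0$. That is the correct version of your extension.
\end{itemize}
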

\begin{proof}
    The proof can be found in Appendix~\ref{app:c-s-conv}.
\end{proof}

This is a notable gap between classical Bayesian updating and the Petz recovery map. 

\subsubsection*{The Mirrored Petz and $R\rho R$ Tomography}

Additionally, we note the following observation: let
\begin{equation}
    R_{\rhoest}=\chn^\dag\left[\frac{1}{\sqrt{\chn[\rhoest]}}q\frac{1}{\sqrt{\chn[\rhoest]}}\right].
\end{equation}
Then, the iterative Petz update can be written as
\begin{equation}
    \rhoest[i+1] = \sqrt{\rhoest[i]} R_{\rhoest[i]} \sqrt{\rhoest[i]}.
\end{equation}
Now, if we factorise it to $(\sqrt{\rhoest[i]R_{\rhoest[i]}})(\sqrt{R_{\rhoest[i]}\rhoest[i]})$ and instead reverse the ordering to the two factors, we get
\begin{equation}
    \rhoest[i+1] = \sqrt{R_{\rhoest[i]}} \rhoest[i] \sqrt{R_{\rhoest[i]}}.
\end{equation}
Let us call this update the \textit{mirrored Petz map}. Expectedly, this map gives different updates to the Petz recovery. Surprisingly, for all our numerical runs, the iterative mirrored Petz update does converge to the true state, including cases where the ordinary Petz map fails to. Furthermore, this update is evocative of another update:
\begin{equation}
    \rhoest[i+1] = \mathcal{N}\left[{R_{\rhoest[i]}} \rhoest[i] {R_{\rhoest[i]}}\right],
\end{equation}
where $\mathcal{N}$ is normalisation. This update is the $R\rho R$ algorithm, a well known algorithm to find the MLE estimator going back to \cite{hradil1997quantum,jevzek2003quantum}. Unlike the iterative Petz map, this update is not monotonic in the likelihood \cite{vrehavcek2007diluted}, and it does have cases where it fails to converge to the MLE estimator \cite{oberender2025spurious}. However, numerically it seems much more reliable than the iterative Petz map, as we will show in Section~\ref{sec:numerics}. A full treatment of the behaviour of the mirrored Petz and its relation to the $R\rho R$ algorithm is beyond the scope of this paper, but Appendix~\ref{app:failure} provides a sketch explaining why the mirrored Petz map converges even where the iterative Petz fails.

Whether any update rule could possibly be used for tomography via such a protocol (where the Bayesian prior is only encoded as a density operator in the input state space) is yet to be seen. In the next section, we show instead an adjustment on the Petz recovery that does allow for tomographic uses.


\section{The Extended Petz Map \\ for State Tomography}\label{sec:can}

We move to introduce tools that will enable the Petz map for tomography.

\subsection{Preliminaries}

It was noticed in Ref.~\cite{liu2026-extendedpetz} that quantum retrodiction generally requires the prior to specify one's belief not only on the system of interest, but also on the preparation. With this in mind, we first introduce the notion of an \textit{extended Petz recovery map}
\begin{equation}
    \rete \chnd \rfe [\omega] = \sqrt{\rfe} \left( \chnd^\dagger\left[\frac{1}{\sqrt{\chnd[\rf]}} \omega \frac{1}{\sqrt{\chnd[\rf]}} \right] \otimes \mathbb{I}_{B} \right)\sqrt{\rfe}.
\end{equation}
Whereas in the typical Petz map we have $\chnd : A \rightarrow A'$ and a reverse map $\ret{\chnd}{\rf} : A' \rightarrow A$, here we have an extended space $B$. The extended Petz map is then obtained by applying the Petz map to the combined channel
\begin{equation}
    \Tr_B \circ(\chnd_A \otimes \mathbb{I}_B).
\end{equation}
That is, the forward channel is a channel acting trivially on the extended space $B$, which is eventually ignored. In general, we will have the following state space conventions: $\rfe \in AB,$ is the extended reference prior, and taking the partial trace $\Tr_B[\rfe] = \rf$ gives us a prior in the original state space. We also introduce the notion of a \textit{candidate state ensemble} (CSE):
\begin{equation}
    \{p(x), \rho_x\}_{x\in X},
\end{equation}
where $X$ is the parameter space, $p(x)$ is a probability distribution over that space, and $\rho_x$ is the associated state to that parameter. For example, in an unconstrained qubit tomography $X$ is the Bloch ball and $\rho_x$ is simply the density operator at the point $x\in X$ in the Bloch ball. Notice how a CSE determines a mean state
\begin{equation}
    \rhoest = \int dx \; p(x) \rho_x,
\end{equation}
but not vice versa. CSEs are thus a classical belief over how a given quantum system ought to be described, and $\rhoest$ is the estimator that is obtained from the belief. This is identical to the description of reference priors in Bayesian tomography \cite{SchackBrunCaves2001,CavesFuchsSchack2002dF}.

Now, to connect the classical reference prior with the extended Petz map, consider the following extended reference prior $\rfe_i$ (which is supplied with an index $i$, to anticipate update iterations later on): 
\begin{equation}
    \rfe_i = \int dx \; p_i(x) \ket{\psi_x}\bra{\psi_x}_{AR} \otimes \ket{x} \bra{x}_C,
\end{equation}
with the extension space $B \equiv RC$ now containing two component spaces: one being a purifying reference space $R$ such that $\psisd \in AR$, with $\Tr_R \big[\psisd\big]= \rfc_x$, and the other one being a classical state index $C$ for the classical register $\ketbra x x$. Note that the mean state is simply obtained by tracing out $B$: $\Tr_B[\rfe_i]=\rhoest[i]$. This can then be used to acquire an extended Petz recovery on a measurement channel $\chn$ \cite{liu2025unifying}:
\begin{widetext}
    \begin{equation}
    \rete \chn {\rfe_i} [q_i] = \sqrt{\rfe_i}  
    \left( \sum_j M_j \bra{j}_{A'} 
    \frac{1}{\sqrt{\chn[\rhoest[i]]}} q_i
    \frac{1}{\sqrt{\chn[\rhoest[i]]}} \ket{j} _{A'}
    \otimes \mathbb{I}_{RC} \right)\sqrt{\rfe_i},
\end{equation}
with $q_i= \sum_y q_i(y) \ketbra yy_{A'}$, this reduces to
\begin{equation} \label{eq:ret-chn-ext-tomo}
    \rete \chn {\rfe_i} [q]  = \int dx\sum_{j} q(j) p_i(x) \frac{\Tr[M_j \rfc_x]}{\Tr[M_j \rhoest[i]]} \psisd \otimes \ketbra x x .
\end{equation}
\end{widetext}
With this, we can apply these tools for tomography.

\subsection{Event Approach}\label{ssec:event-ext}


For the event approach, we receive singular clicks of a classical register as we send $\rho$ through $\chn$. If we register $[q_i]=k$ (i.e. $q_i = \ketbra kk$) at the $i$-th click (for $i \in \mathbb{Z}^{>0}$), we update $\rfe_{i-1}$ as follows
\begin{equation}
    \rfe_{i}\overset{[q_i]=k}{=} \int dx \; p_{i-1}(x) \frac{\Tr[M_k \rfc_{x}]}{\Tr[M_k \rhoest[i-1]]} \psisd \otimes \ketbra x x .
\end{equation}
This simply means that every collection of data adds a factor corresponding to the ratio of the probabilities that the component candidate triggers $k$ and the probability that the overall mean state triggers $k$:
\begin{equation}
    i \in \mathbb{Z}^{>0}: \quad p_{i}(x) \overset{[q_i]=k}{=}  p_{i-1}(x)  \frac{\Tr[M_k \rfc_{x}]}{\Tr[M_k \rhoest[i-1]]}
\end{equation}
Together, we can then write the eigenvalues of $\rfe_i$ as a function of a set of collected data $\Vec{q_i} = ([q_1],[q_2]\dots [q_i])$.
\begin{equation}\label{eq:event-update-extended}
    p_i(x|\Vec{q_i})= \frac{p_0(x)\prod^i_{\ell=1}\Tr[M_{[q_\ell]} \rfc_x]}{\int dx' p_0(x')\prod^i_{\ell'=1}\Tr[M_{[q_{\ell'}]} \rfc_{x'}]},
\end{equation}
where $p_0$ is the initializing reference distribution characterizing $\rfe_0$. Thus, 
\begin{equation}
    \rfe_i= \int dx \;  p_i(x|\Vec{q_i}) \ketbra{\psi_x}{\psi_x}\otimes \ketbra{x}{x},
\end{equation}
and taking a partial trace we have an estimator for $\rho$,
\begin{equation}
    \rhoest[i]=\Tr_{RC}[\rfe_i]= \int dx \;  p_i(x|\Vec{q_i}) \rfc_x.
\end{equation}
This expression, it turns out, is equivalent to the estimator obtained in \cite{SchackBrunCaves2001}, which is a Bayesian tomography protocol applicable to the case of quantum states generally. This is easily seen from Eq.~\eqref{eq:event-update-extended}, where the numerator is simply the likelihood of the data $\Vec{q_i}$ given a certain parameter $x$,
\begin{equation}
    \mathcal{L}(\Vec{q}_i|\rho_x)=\prod^i_{\ell=1}p(q_\ell|\rho_x)=\prod^i_{\ell=1}\Tr[M_{[q_\ell]} \rfc_x],
\end{equation}
and the update is simply
\begin{equation}
    p_i(x|\Vec{q}_i) \propto p_0(x) \mathcal{L}(\Vec{q}_i|\rho_x),
\end{equation}
which is the standard updating in Bayesian tomography.

\subsection{Batch Approach}\label{ssec:batch-ext}
While the event approach worked, we would still like to find out whether a batch approach, similar to Section~\ref{ssec:safranek-qst}, would also work. Thus, let us consider now an approach where the incoming data $q$ is the exact measurement frequencies. i.e, $q = \chn[\rho]$. Applying \eqref{eq:ret-chn-ext-tomo}, we have
\begin{align}
&\rete \chn {\rfe_{i-1}} \big[\chn[\rho]\big] \nonumber
= \\
&\int dx \sum_{j}
p_{i-1}(x)
\Tr[M_j\rho]
\frac{\Tr[M_j\rfc_x]}
{\Tr[M_j\rhoest[i-1]]}
\psisd\otimes\ketbra{x}{x}.
\end{align}
Taking the partial trace over $RC$ gives the update
\begin{equation}\label{eq}
\rhoest[i]
=
\int dx\sum_{j}
p_{i-1}(x)
\Tr[M_j\rho]
\frac{\Tr[M_j\rfc_x]}
{\Tr[M_j\rhoest[i-1]]}
\rfc_x,
\end{equation}
or equivalently,
\begin{equation}\label{eq}
p_{i}(x)=p_{i-1}(x)\sum_j\Tr[M_j\rho]\frac{\Tr[M_j\rfc_x]}{\Tr[M_j\rhoest[i-1]]}.
\end{equation}
This is nothing but a classical Jeffrey's update on the evidence $q(j)=\Tr[M_j\rho]$, as in \eqref{eq:jeff-upd}. In particular, we may write as a classical stochastic channel $\Tr[M_j\rfc_x] = \tilde{\chn}(j|x)$ with $\sum_j\tilde{\chn}(j|x) = 1$, the prior as $p_{i-1}(x)$ and the denominator terms as a propagation $\Tr[M_j\rhoest[i-1]] = \sum_x p_{i-1}(x) \Tr[M_j \rho_x ] = [\tilde{\chn}p_{i-1}](j)$.

By Theorem~\ref{thm:c-s-conv}, this series of iterations will indeed bring the likelihood of $\rhoest[i]$ to the supremum \footnote{Given that the true state is inside the convex hull of the CSE. In this paper, we consider a continuous CSE that covers the whole space which automatically guarantees it. However, this detail is relevant in the implementation, where there is only a finite number of $\rho_x$.}. When the POVM is informationally complete, the MLE is unique. However, because the map from $\pest$ to $\rhoest$ is not injective, i.e., different distributions of belief can lead to the same mean ensemble state, the underlying classical belief itself is not unique. That is, there can be one unique $\rhoest[MLE]$ that can be achieved by different $\pest$, because of the fact that likelihood is only dependent on $\rhoest$, not $\pest$. In any case, the  estimator indeed converges to the target state as the iterations increase, as long as the measurements are informationally complete. 



\section{Numerical Examples}\label{sec:numerics}

\begin{figure}
    \centering
    \includegraphics[width=0.45\textwidth]{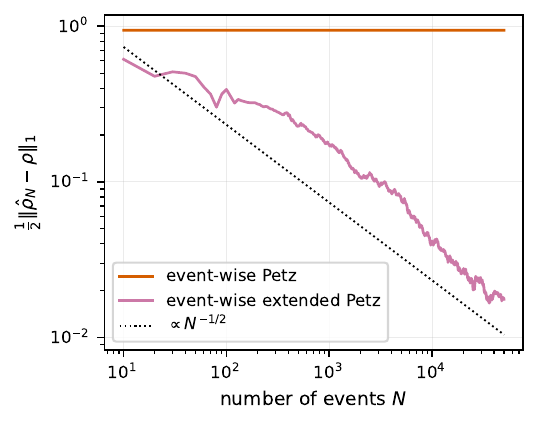}
    \caption{The comparison of the event-wise methods for a random example in $d=4$. Even with non-projective measurements, the unextended Petz map converged to a pure state relatively quickly, as can be seen by the plateau in the distance. The extended Petz map, which is equivalent to Bayesian tomography, converges to the true state as expected.}
    \label{fig:comparison_event}
\end{figure}

\begin{figure}
    \centering
    \includegraphics[width=0.45\textwidth]{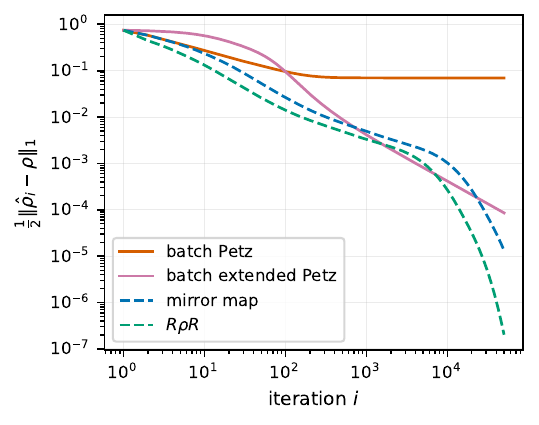}
    \caption{The comparison of the batch methods for an instance in $d=4$ where the iterative unextended Petz map gets stuck in a rank-deficient state, as measured by the distance to the true state. The other three methods, on the other hand, converge to the true state as expected.}
    \label{fig:comparison_batch}
\end{figure}

\begin{figure*}
    \centering
    \includegraphics[width=0.95\textwidth]{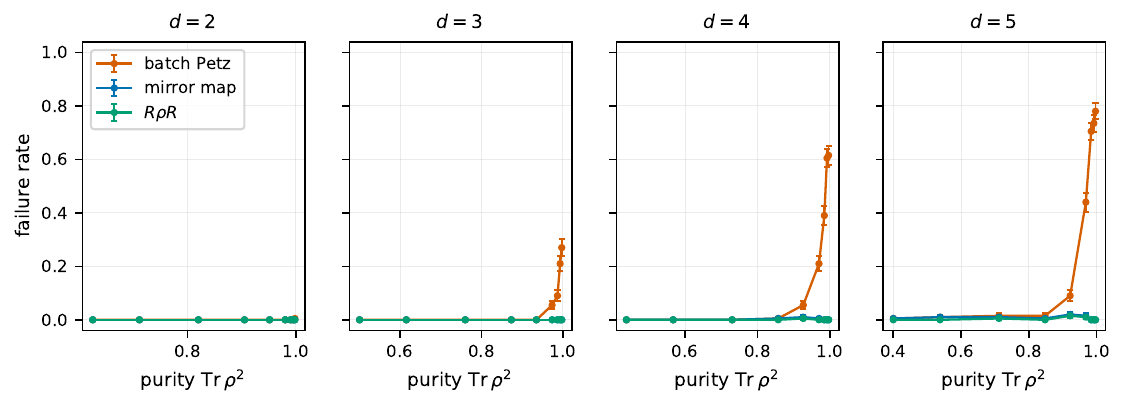}
    \caption{The failure rate of the iterative unextended Petz map, along with the mirrored Petz map and the $R\rho R$ for different dimensions. The condition of failure is $\frac{1}{2}\|\rhoest[N]-\rho\|_1>0.01$ after 50000 iterations. As can be seen, the Petz map tends to fail more often at higher dimensions and high purity. Note that due to the basic nature of the failure condition, false negatives and false positives should be expected.}
    \label{fig:capture_rate}
\end{figure*}

To illustrate the performance of the different protocols, we run each of their respective algorithms for the same initial setup (i.e. the same POVMs and input states and so on). Once again, the protocols are:
\begin{enumerate}
    \item \textit{The event-wise unextended Petz map} (Sec.~\ref{ssec:event-unext}): this converges to a pure state quickly, even for non-projective measurements, since the only fixed points are the pure states.
    \item \textit{The batch unextended Petz map} approach of \cite{murk2026connecting} (Sec.~\ref{ssec:safranek-qst}): this one often fails when the true state has high purity and when the dimension is high.
    \item \textit{The event-wise extended Petz map} (Sec.~\ref{ssec:event-ext}), equivalent to Bayesian tomography: by virtue of the extension, and the fact that the belief converges to a Dirac-delta, one needs to be careful in the implementation, which necessarily requires a discretization. The details of the implementation can be found in \cite{granade2016practical}.
    \item \textit{The batch extended Petz map} (Sec.~\ref{ssec:batch-ext}): in contrast to the event-wise counterpart, discretization is more straightforward in this setting; as long as the true state is in the convex hull of the discrete points, the estimator will converge to the true state.
\end{enumerate}
Additionally, we include comparisons between the mirrored Petz map and the $R\rho R$ maps to the unextended Petz map. Comparisons for one instance in both approaches can be found in Fig.~\ref{fig:comparison_event} and Fig.~\ref{fig:comparison_batch}. These show the unextended Petz map does not perform tomography, for both event-wise and batch approaches.

Note that the example in Fig.~\ref{fig:comparison_batch} is specifically selected as an example in which the unextended Petz map fails, since the failure is not guaranteed. To complete the comparison, in Fig.~\ref{fig:capture_rate} we analyse the failure rate of the unextended Petz map in the batch approach through different dimensions and different purities of the true state. As a comparison, we also added the mirrored Petz map and the $R\rho R$ algorithm. For each randomly sampled instance (consisting of $d^2+2$ random POVMs and a randomly sampled state with a certain purity), we run the three protocols for 50000 iterations. If $\|\rhoest[N]-\rho\|_1>0.01$ at the end, we flag that particular instance as having failed to converge by the protocol. This method is rudimentary and does not fully capture the failure rate (as there can be false positives and false negatives); however, it is enough to illustrate the staggering difference between the three maps and the tendency of the normal Petz map to fail at high purity.

In Appendix~\ref{app:failure}, we propose an explanation of why the Petz map fail at this regime. Moreover, although here we use purity, what is more important is how close the true state is to the rank-deficient subspace, which in $d>2$ can be achieved without having high purity. Especially in higher dimensions, counterexample could be found where the true state is close to a rank-deficient subspace, and thus the Petz map fails to converge, while having a relatively low purity. However for our illustration purity is a good proxy for this behaviour. Finally, considering that operationally purity is desired in quantum algorithms and protocols, this failure point seems to disqualify the iterative unextended Petz map as a \textit{useful} tomography protocol without any modifications.

\section{Concluding Remarks}\label{sec:concl}
Here, we discuss possible upshots of our observations.  We first note how the obstruction to using the unextended Petz recovery map for tomography seems to lie ultimately in a mismatch between the space on which the inference is performed and the object being estimated. In the classical case, the na\"ive form of event-wise Jeffrey's updating similarly fails when performed directly on the simplex of states (that is, the direct classical parallel to the protocol in Section \ref{ssec:event-unext}). This can be remedied by extending the inference to a parameter space (eg. the bias of a coin, see Appendix \ref{app:coin}), with each point in the simplex corresponding to a candidate with a live probability weight, as it were. 

The extended Petz construction follows precisely this logic: the CSE lifts the inference from the density operator $\rho$ to a classical belief $p_i(x)$ over candidate states $\rho_x$, with $\rhoest[i]$ recovered only as its mean. In this sense, the modifications are not arbitrary, but a quantum counterpart of the parameter-space approach that makes classical, event-wise Bayesian construction of probability distributions possible.

This also leaves an interesting distinction between the classical and quantum settings. On the classical simplex, the corresponding likelihood maximization by repeatedly performing Jeffrey's updates (as a na\"ive parallel to the protocol in Section \ref{ssec:safranek-qst}) does produce a MLE under the usual non-singularity assumptions. For quantum states, however, the iteration may instead approach rank-deficient states without reaching the MLE \cite{murk2026connecting}. Why this occurs remains an open question. It may suggest that one's model or belief about a quantum system cannot be solely captured by a density operator in a corresponding state space. Rather, one invokes a duality to one's model: retaining an open \textit{classical} belief over purified \textit{quantum} states, precisely as encoded by a CSE. The extended Petz map then provides the natural retrodictive mechanism on this enlarged space.

Finally, the Petz perspective clarifies the tomographic capabilities and limitations of quantum retrodiction. It shows that the Bayesian interpretation of the Petz map can indeed be applied to tomography (recovering the key protocol found in \cite{SchackBrunCaves2001} and inducing a generally applicable batch-wise method as well) but only after augmenting the state space in a manner sensitive to the geometry of quantum states. The extended Petz map therefore provides a bridge between quantum retrodiction and Bayesian tomography, while also making explicit why the direct density-operator analogue of classical Bayesian updating may be fundamentally insufficient. \\ \vspace{1em}

\section*{Acknowledgements}
We thank Francesco Buscemi, Ge Bai, Seok Hyung Lie, and Robert Spekkens for helpful discussions. This project is supported by the National Research Foundation, Singapore through the National Quantum Office, hosted in A*STAR, under its Centre for Quantum Technologies Funding Initiative (S24Q2d0009); and by the Ministry of Education, Singapore, under the Tier 2 grant ``Bayesian approach to irreversibility'' (Grant No.~MOE-T2EP50123-0002).

\bibliographystyle{unsrt}
\bibliography{apssamp}

@PREAMBLE{
 "\providecommand{\noopsort}[1]{}" 
 # "\providecommand{\singleletter}[1]{#1}%" 
}

@article{bai2025quantum,
  title={Quantum bayes’ rule and petz transpose map from the minimum change principle},
  author={Bai, Ge and Buscemi, Francesco and Scarani, Valerio},
  journal={Physical Review Letters},
  volume={135},
  number={9},
  pages={090203},
  year={2025},
  publisher={APS}
}

@article{AwBS,
author = {Aw,Clive Cenxin  and Buscemi,Francesco  and Scarani,Valerio },
title = {Fluctuation theorems with retrodiction rather than reverse processes},
journal = {AVS Quantum Science},
volume = {3},
number = {4},
pages = {045601},
year = {2021},
doi = {10.1116/5.0060893},
URL = {https://doi.org/10.1116/5.0060893},
eprint = {https://doi.org/10.1116/5.0060893}
}

@Article{BJPsym21,
AUTHOR = {Barnett, Stephen M. and Jeffers, John and Pegg, David T.},
TITLE = {Quantum Retrodiction: Foundations and Controversies},
JOURNAL = {Symmetry},
VOLUME = {13},
YEAR = {2021},
NUMBER = {4},
ARTICLE-NUMBER = {586},
URL = {https://www.mdpi.com/2073-8994/13/4/586},
ISSN = {2073-8994},
DOI = {10.3390/sym13040586}
}

@article{BS21,
  title = {Fluctuation theorems from Bayesian retrodiction},
  author = {Buscemi, Francesco and Scarani, Valerio},
  journal = {Phys. Rev. E},
  volume = {103},
  issue = {5},
  pages = {052111},
  numpages = {8},
  year = {2021},
  month = {May},
  publisher = {American Physical Society},
  doi = {10.1103/PhysRevE.103.052111},
  url = {https://link.aps.org/doi/10.1103/PhysRevE.103.052111}
}

@article{petz1,
  title = {Sufficient subalgebras and the relative entropy of states of a von Neumann algebra},
  author = {Petz, Denes},
  journal = {Comm. Math. Phys.},
  volume = {105},
  issue = {1},
  pages = {123--131},
  year = {1986},
  doi = {10.1007/BF01212345}
}

@article{petz,
	author = {Petz, Denes},
	title = "{Sufficiency of channels over von Neumann algebras}",
	journal = {The Quarterly Journal of Mathematics},
	volume = {39},
	number = {1},
	pages = {97-108},
	year = {1988},
	month = {03},
	issn = {0033-5606},
	doi = {10.1093/qmath/39.1.97}
}

@article{watanabe65,
	title = {Conditional Probabilities in Physics},
	author = {Watanabe, Satosi},
	journal = {Progr. Theor. Phys. Suppl.},
	volume = {E65},
	issue = {},
	pages = {135--160},
	numpages = {0},
	year = {1965},
	month = {Jan},
	publisher = {},
	doi = {https://doi.org/10.1143/PTPS.E65.135}
}

@book{wilde_2013, place={Cambridge}, title={Quantum Information Theory}, DOI={10.1017/CBO9781139525343}, publisher={Cambridge University Press}, author={Wilde, Mark M.}, year={2013}}

@article{wilde-recov,
	doi = {10.1098/rspa.2015.0338},
	year = 2015,
	volume = {471},
	pages = {20150338},
	author = {Wilde, M.M.},
	title = {Recoverability in quantum information theory},
	journal = {Proceedings of the Royal Society A}
}

@article{petzisking2022axioms,
  title={Axioms for retrodiction: achieving time-reversal symmetry with a prior},
  author={Parzygnat, Arthur J and Buscemi, Francesco},
  journal={Quantum},
volume = {7},
pages = {1013},
  year={2023}
}

@article{QPRPetzPaper,
  title={Quantum Bayesian Inference in Quasiprobability Representations},
  author={Aw, Cenxin Clive and Onggadinata, Kelvin and Kaszlikowski, Dagomir and Scarani, Valerio},
  journal={PRX Quantum},
  volume={4},
  number={2},
  pages={020352},
  year={2023},
  publisher={APS},
url={https://doi.org/10.1103/PRXQuantum.4.020352}}

@article{aw2026thesis,
  title={Classical and quantum reverse processes through Bayesian inference},
  author={Aw, Clive Cenxin},
  journal={International Journal of Quantum Information},
  year={2026},
  publisher={World Scientific}
}

@inproceedings{zhou2014belief-BK,
  title={Belief-Kinematics Jeffrey-s Rules in the Theory of Evidence.},
  author={Zhou, Chunlai and Wang, Mingyue and Qin, Biao},
  booktitle={Conference on Uncertainty in Artificial Intelligence},
  year={2014},
  url={https://api.semanticscholar.org/CorpusID:16595332}
}

@article{liu2026-extendedpetz,
  title={Proper and Improper Mixed States Serve as Different Prior Beliefs for Quantum State Retrodiction},
  author={Liu, Mingxuan and Scarani, Valerio and Bai, Ge},
  journal={Physical Review Letters},
  volume={136},
  number={6},
  pages={060203},
  year={2026},
  publisher={APS}
}

@article{SchackBrunCaves2001,
  title = {Quantum {B}ayes rule},
  author = {Schack, R. and Brun, T.~A. and Caves, C.~M.},
  journal = {Phys. Rev. A},
  volume = {64},
  issue = {1},
  pages = {014305},
  year = {2001},
  publisher = {American Physical Society},
  doi = {10.1103/PhysRevA.64.014305},
}

@article{CavesFuchsSchack2002dF,
  title = {Unknown quantum states: The quantum de {F}inetti representation},
  author = {Caves, C.~M. and Fuchs, C.~A. and Schack, R.},
  journal = {J. Math. Phys.},
  volume = {43},
  number = {9},
  pages = {4537--4559},
  year = {2002},
  doi = {10.1063/1.1494475},
}

@article{murk2026connecting,
  title={Connecting Quantum Tomography and Quantum Retrodiction},
  author={Murk, Sebastian and Tan, Ian and M{\"u}ller, Fabian and {\v{S}}afr{\'a}nek, Dominik},
  journal={arXiv preprint arXiv:2606.23777},
  year={2026}
}

@book{wasserman2004all-of-statistics,
  title={All of statistics: a concise course in statistical inference},
  author={Wasserman, Larry},
  volume={26},
  year={2004},
  publisher={Springer}
}

@article{csiszar1984information,
  title={Information geometry and alternating minimization procedures},
  author={Csisz{\'a}r, Imre and Tusnády, Gábor},
  journal={Statistics and Decisions, Dedewicz},
  volume={1},
  pages={205--237},
  year={1984},
  publisher={Munich, Oldenburg Verlag}
}

@article{jevzek2003quantum,
  title={Quantum inference of states and processes},
  author={Je{\v{z}}ek, Miroslav and Fiur{\'a}{\v{s}}ek, Jarom{\'\i}r and Hradil, Zden{\v{e}}k},
  journal={Physical Review A},
  volume={68},
  number={1},
  pages={012305},
  year={2003},
  publisher={APS}
}

@article{granade2016practical,
  title={Practical bayesian tomography},
  author={Granade, Christopher and Combes, Joshua and Cory, DG},
  journal={new Journal of Physics},
  volume={18},
  number={3},
  pages={033024},
  year={2016},
  publisher={IOP Publishing}
}

@book{QSTparis2004quantum,
  title={Quantum state estimation},
  author={Paris, Matteo and Rehacek, Jaroslav},
  volume={649},
  year={2004},
  publisher={Springer Science \& Business Media}
}

@book{QSTenglert2025lecture,
  title={Lecture on Quantum State Estimation},
  author={Englert, Berthold-Georg},
  year={2025},
  publisher={World Scientific}
}

@article{QSTcramer2010efficient,
  title={Efficient quantum state tomography},
  author={Cramer, Marcus and Plenio, Martin B and Flammia, Steven T and Somma, Rolando and Gross, David and Bartlett, Stephen D and Landon-Cardinal, Olivier and Poulin, David and Liu, Yi-Kai},
  journal={Nature communications},
  volume={1},
  number={1},
  pages={149},
  year={2010},
  publisher={Nature Publishing Group UK London}
}

@article{hradil1997quantum,
  title={Quantum-state estimation},
  author={Hradil, Zdenek},
  journal={Physical Review A},
  volume={55},
  number={3},
  pages={R1561},
  year={1997},
  publisher={APS}
}

@article{blume2010optimal,
  title={Optimal, reliable estimation of quantum states},
  author={Blume-Kohout, Robin},
  journal={New Journal of Physics},
  volume={12},
  number={4},
  pages={043034},
  year={2010}
}

@article{vrehavcek2007diluted,
  title={Diluted maximum-likelihood algorithm for quantum tomography},
  author={{\v{R}}eh{\'a}{\v{c}}ek, Jaroslav and Hradil, Zden{\v{e}}k and Knill, Emanuel and Lvovsky, Alexander I},
  journal={Physical Review A—Atomic, Molecular, and Optical Physics},
  volume={75},
  number={4},
  pages={042108},
  year={2007},
  publisher={APS}
}

@article{oberender2025spurious,
  title={On spurious fixed points in iterative maximum likelihood reconstruction for quantum tomography},
  author={Oberender, Florian},
  journal={arXiv preprint arXiv:2508.14549},
  year={2025}
}

@article{fuchs2003quantum,
  title={Quantum mechanics as quantum information, mostly},
  author={Fuchs, Christopher A},
  journal={Journal of Modern Optics},
  volume={50},
  number={6-7},
  pages={987--1023},
  year={2003},
  publisher={Taylor \& Francis}
}

@article{liu2025retrodictive,
  title={Retrodictive approach to quantum state smoothing},
  author={Liu, Mingxuan and Scarani, Valerio and Auff{\`e}ves, Alexia and Laverick, Kiarn T},
  journal={Physical Review A},
  volume={112},
  number={3},
  pages={L030203},
  year={2025},
  publisher={APS}
}

@article{liu2025unifying,
  title={Unifying quantum smoothing theories with extended retrodiction},
  author={Liu, Mingxuan and Bai, Ge and Scarani, Valerio},
  journal={arXiv preprint arXiv:2510.08447},
  year={2025}
}

\appendix
\onecolumngrid
\section{Parameter-Space Lifting, Bit Example}\label{app:coin}
Examples of event-wise classical Jeffrey's updating for characterizing probability functions exist in many standard texts on statistics (consider \cite{wasserman2004all-of-statistics}). We include one here briefly, just for illustration. Consider a bit with bias $b\in[0,1]$, so that
\begin{equation}
    p(0|b)=1-b,\qquad p(1|b)=b.
\end{equation}
The parameter $b$ therefore defines a classical channel
\begin{equation}
    \tilde{\mathcal E}(y|b)=p(y|b),
    \qquad y\in\{0,1\}.
\end{equation}
Here $y$ denotes the outcome of a single click. Rather than assigning a
belief directly to the probability vector $(1-b,b)$, we assign a prior
density $\pi_0(b)$ over the possible biases.

After $N$ clicks, let $n$ denote the number of clicks with outcome $0$, so
that $N-n$ have outcome $1$. The likelihood of the observed data is
\begin{equation}
    \prod_{\ell=1}^N
    \tilde{\mathcal E}([q_\ell]|b)
    =(1-b)^n b^{N-n}.
\end{equation}
Thus the event-wise Jeffrey's updates give
\begin{equation}
    \pi_N(b)
    =
    \frac{
        \pi_0(b)(1-b)^n b^{N-n}
    }{
        \int_0^1 db'\,
        \pi_0(b')(1-b')^n {b'}^{N-n}
    }.
\end{equation}
If the true bias is $b^\ast$, then $\frac{n}{N}\longrightarrow 1-b^\ast$ and $\frac{N-n}{N}\longrightarrow b^\ast$.
Consequently, provided $\pi_0(b^\ast)>0$, the posterior concentrates on the
true parameter,
\begin{equation}
    \lim_{N\to\infty}\pi_N(b)=\delta(b-b^\ast).
\end{equation}

\onecolumngrid

\section{Proof for Theorem \ref{thm:c-s-conv}}
\label{app:c-s-conv}
The principle of Theorem \ref{thm:c-s-conv} is largely a known consistency between frequentism and Bayesianism when the same evidence is inferred upon over an arbitrarily large number of rounds \cite{wasserman2004all-of-statistics}. We express it here explicitly in the category of finite stochastic maps. 

All that is needed is to frame the iteration within Csiszár and Tusnády's information-geometric framework \cite{csiszar1984information}. First, let us state their theorem.
    
\begin{theorem}\label{thm:c-t-conv}
    \cite{csiszar1984information} $\mathcal{P}$ and $\mathcal{Q}$ are two convex sets of finite measures. Let $P_0 \in \mathcal{P}$ be some arbitrary starting element, and let for each $n \geq 0$, $Q_n = \argmin_{Q\in\mathcal{Q}}D(P_n \parallel  Q)$ and $P_{n+1} = \argmin_{P\in\mathcal{P}}D(P \parallel  Q_n)$. Then, $D(P_n\parallel Q_n)$ converges to the infimum of $D(P\parallel Q)$ on $\mathcal{P}_0 \times \mathcal{Q}$ where $\mathcal{P}_0$ is the set of all $P \in \mathcal{P}$ such that $D(P\parallel Q_n)<\infty$ for some $n$.
\end{theorem}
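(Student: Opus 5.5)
The plan is the standard Csiszár--Tusnády argument: establish two geometric inequalities, one for each half-step of the alternation, combine them into a ``five-point property'', and telescope. For finite measures I write $D(P\parallel Q)=\sum_x [P(x)\log\frac{P(x)}{Q(x)}-P(x)+Q(x)]$, which is nonnegative and jointly convex.

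First, monotonicity is immediate. Each half-step minimises over one argument with the other fixed, so
\begin{equation*}
D(P_{n+1}\parallel Q_{n+1})\le D(P_{n+1}\parallel Q_n)\le D(P_n\parallel Q_n).
\end{equation*}
Hence $D(P_n\parallel Q_n)$ decreases to a limit $L$. Clearly $L$ is at least the infimum over $\mathcal{P}_0\times\mathcal{Q}$, because each $P_n$ with $n\ge 1$ lies in $\mathcal{P}_0$. It remains to show $L\le D(P\parallel Q)$ for every $P\in\mathcal{P}_0$ and $Q\in\mathcal{Q}$.

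\emph{Three-point property.} The first inequality concerns the $P$-step, which is an I-projection of $Q_n$ onto the convex set $\mathcal{P}$. For every $P\in\mathcal{P}$,
\begin{equation*}
D(P\parallel Q_n)\ge D(P_{n+1}\parallel Q_n)+D(P\parallel P_{n+1}).
\end{equation*}
To prove it, I differentiate $t\mapsto D((1-t)P_{n+1}+tP\parallel Q_n)$ at $t=0^+$. Minimality of $P_{n+1}$ makes this derivative nonnegative, which gives $\sum_x (P-P_{n+1})\log\frac{P_{n+1}}{Q_n}\ge 0$. Expanding $D(P\parallel Q_n)-D(P_{n+1}\parallel Q_n)-D(P\parallel P_{n+1})$ yields exactly this quantity.

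\emph{Four-point property.} The second inequality concerns the $Q$-step. For all $P\in\mathcal{P}$ and $Q\in\mathcal{Q}$,
\begin{equation*}
D(P\parallel Q_n)\le D(P\parallel P_n)+D(P\parallel Q).
\end{equation*}
Here I again use the first-order optimality condition for $Q_n=\argmin_{Q}D(P_n\parallel Q)$ in the direction $Q-Q_n$, namely $\sum_x (Q-Q_n)(1-\frac{P_n}{Q_n})\ge 0$. I then combine it with the elementary bound $\log\frac{a}{b}\le \log\frac{a}{c}+\frac{c}{b}-1$, applied pointwise with $a=P$, $b=Q_n$ and $c$ chosen to mix $P_n$ and $Q$. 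Making the pointwise algebra close is the main obstacle. I would follow Csiszár--Tusnády's route, where the key pointwise inequality is $P\log\frac{Q}{Q_n}\le P\log\frac{P}{P_n}+P_n\frac{Q}{Q_n}$ up to linear correction terms, and check that after summation the correction terms are exactly those controlled by the optimality condition.

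\emph{Telescoping.} Adding the two properties, with the four-point one applied at index $n$ and the three-point one at index $n+1$, gives for $P\in\mathcal{P}_0$ and $Q\in\mathcal{Q}$ the five-point property
\begin{equation*}
D(P_{n+1}\parallel Q_n)+D(P\parallel P_{n+1})\le D(P\parallel Q)+D(P\parallel P_n).
\end{equation*}
Using $D(P_{n+1}\parallel Q_{n+1})\le D(P_{n+1}\parallel Q_n)$ and summing over $n=m,\dots,N$ gives
\begin{equation*}
\sum_{n=m}^{N}\big[D(P_{n+1}\parallel Q_{n+1})-D(P\parallel Q)\big]\le D(P\parallel P_m).
\end{equation*}
This bound is finite once $m$ is chosen with $D(P\parallel Q_{m-1})<\infty$, which is possible by the definition of $\mathcal{P}_0$: the three-point property then gives $D(P\parallel P_m)\le D(P\parallel Q_{m-1})<\infty$. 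The summands are monotone decreasing in $n$, so the bounded partial sums force $L\le D(P\parallel Q)$.

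For Theorem~\ref{thm:c-s-conv}, the intended application would take $\mathcal{P}$ to be the joint distributions with $y$-marginal $q$ and $\mathcal{Q}$ to be the family $\{\tilde\chn(y|x)p(x)\}$. In this setting the alternation reproduces the update \eqref{eq:jeff-upd}.
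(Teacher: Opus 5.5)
Your proposal is correct and is the standard Csisz\'ar--Tusn\'ady argument (three-point, four-point, five-point property, then telescoping), which is what the paper relies on: it states this theorem with a citation and gives no proof of its own. The step you flagged as the main obstacle does close: applying $\log t\le t-1$ with $t=\frac{P_nQ}{PQ_n}$ (your choice $c=P_nQ/P$) gives $D(P\parallel Q_n)-D(P\parallel P_n)-D(P\parallel Q)\le-\sum_x (Q-Q_n)\bigl(1-\frac{P_n}{Q_n}\bigr)$, and the right-hand side is $\le 0$ by the optimality condition for $Q_n$.
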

    
One of the stated motivation for their theorem is for the Expectation-Maximisation (EM) algorithm for MLE problems, which iterative Jeffrey's update is an instance of. Thus, here we will state the explicitly the connection between Jeffrey's update and Theorem~\ref{thm:c-t-conv}. First, let us first begin by defining two convex sets,
\begin{equation}
    \mathcal{Q}=\{\chn(y|x)p(x) \;|\; \forall p(x)\in \Delta_d\}    
\end{equation}
and
\begin{equation}
    \mathcal{P}=\{J(x,y)\;|\;\sum_x J(x,y)=q(y)\}.
\end{equation}
For simplicity $P(x,y)$ will be understood as a member of $\mathcal{P}$, and $P(x)$ and $P(y)$ will be understood as its marginals, and similarly for $Q$ and $\mathcal{Q}$. Now, note that $P_{n+1} = \argmin_{P\in\mathcal{P}}D(P \parallel  Q_n)$ is simply finding the closest joint distribution in $\mathcal{P}$ given a point $Q_n \in \mathcal{Q}$ which can be expressed as $Q_n(x,y)=\chn(y|x)\pest[n](x)$. To project this to $\mathcal{P}$, we need to solve or
\begin{align}
    D(P\parallel Q_n) &= \sum_{x,y}P(x,y) \log \frac{P(x,y)}{\chn(y|x)\pest[n](x)} \nonumber \\
    &= \sum_{x,y}P(y)P(x|y) \log \frac{P(y)P(x|y)}{Q_n(x|y)Q_n(y)} \nonumber \\
    &= \sum_y P(y) \log \frac{P(y)}{Q_n(y)} + \sum_{y} P(y)\sum_xP(x|y)\log\frac{P(x|y)}{Q_n(x|y)}.
\end{align}
The first term is fixed because $P(y)=q(y)$ by the constraint on $\mathcal{P}$. What remains is the term $P(x|y)$ in $\sum_xP(x|y)\log\frac{P(x|y)}{Q_n(x|y)}$. This is simply a relative entropy of the marginals for a fixed $y$, and it is minimised by $P(x|y) = Q_n(x|y)$. Thus,
\begin{equation}
    P_{n+1}(x,y)=Q_n(x|y)q(y)=\frac{\chn(y|x)\pest[n](x)}{[\chn\pest[n]](y)}q(y).
\end{equation}
Note that when seen as an EM algorithm, this first projection is the E-step. On the other hand, when we project it back to $\mathcal{Q}$, we need to find $p$ such that $D(P_{n+1}\parallel \chn p)$ is minimised.
\begin{align}
    D(P_{n+1}\parallel \chn p)&=\sum_{x,y} P_{n+1}(x,y) \log \frac{P_{n+1}(x,y)}{\chn(y|x)p(x)} \nonumber \\
    &= \sum_{x,y} P_{n+1}(x,y) \log \frac{P_{n+1}(x,y)}{\chn(y|x)} - \sum_{x} P_{n+1}(x)\log p(x). 
\end{align}
The first term is constant and thus we can ignore. By Gibbs' inequality, the term $- \sum_{x} P_{n+1}(x)\log p(x)$ is minimal when $p(x) = P_{n+1}(x)=\sum_yP_{n+1}(x,y)$. Thus,
\begin{equation}
    Q_{n+1}(x,y)=\chn(y|x)\sum_{y'}P_{n+1}(x,y').
\end{equation}
In particular, when we consider the $x$ marginal of $Q_{n+1}$, we obtain back Jeffrey's update:
\begin{equation}
    \sum_yQ_{n+1}(x,y)=\sum_{y'}P_{n+1}(x,y')=\sum_y\frac{\chn(y|x)\pest[n](x)}{[\chn\pest[n]](y)}q(y).
\end{equation}
This second projection is the M-step. Therefore, iterating over Jeffrey's update is simply a series of projections between the two convex sets $\mathcal{P}$ and $\mathcal{Q}$. Now, observe that, after the projection to $\mathcal{P}$,
\begin{align}
    D(P_{n+1}\parallel Q_n) &=\sum_y P_{n+1}(y) \log \frac{P_{n+1}(y)}{Q_n(y)} + \sum_{y} P_{n+1}(y)\sum_xP_{n+1}(x|y)\log\frac{P_{n+1}(x|y)}{Q_n(x|y)} \nonumber \\
    &= \sum_y q(y) \log \frac{q(y)}{[\chn\pest[n]](y)} \nonumber \\
    &= \sum_y q(y) \log q(y) - \mathcal{L}(q|\pest[n]).
\end{align}
where here, the second equality is obtained from the fact that $P_{n+1}(x|y)=Q_n(x|y)$ and $\mathcal{L}(q|\pest[n]) = \sum_y q(y) \log [\chn\pest[n]](y) $ is the log-likelihood of observing $q(y)$ given an estimator $\pest[n]$. Note that the first term in the last line is a constant term. Thus, from Theorem~\ref{thm:c-t-conv}, the iteration of Jeffrey's updates will bring $D(P_{n+1}\parallel Q_n)$ to the infimum of $D(P\parallel Q)$ over $\mathcal{P}_0\times\mathcal{Q}$. Equivalently, this infimum divergence is related to the supremum of the log-likelihood $\mathcal{L}(q|\pest[n])$ by
\begin{equation}
    \inf_{(P,Q) \in \mathcal{P}_0\times\mathcal{Q}} D(P\parallel Q) = \sum_y q(y)\log q(y) - \sup_{\pest \in \Delta_d} \mathcal{L}(q|\pest).
\end{equation}
In particular, by the fact that in the asymptotic regime the evidence $q=\chn p$ is the image of the true state $p$, and that because of the initial estimator $\pest[0]$ is full-support we have $\text{supp}(p) \subseteq \text{supp}(\pest[0])$, the supremum of the log-likelihood itself is $\sum_y q(y)\log q(y)$. Thus, for $n \rightarrow \infty$, $D(P_n\parallel Q_n) \rightarrow 0$ and $\chn \pest[n] \rightarrow q$. Finally, if $\chn$ is injective, $\pest[n] \rightarrow p$.

\section{Iterative Petz vs Iterative Mirrored Petz}\label{app:failure}

The crux of the difference in the two updates can be seen when we write both updates in the eigenbasis of $\rhoest$. At some current estimator $\rhoest$, via the Petz map we have
\begin{equation}
    \left(\sqrt{\rhoest}R\sqrt{\rhoest}\right)_{j,k}=\sqrt{\lambda_j \lambda_k} R_{jk},
\end{equation}
whereas via the mirrored Petz map we have
\begin{equation}
    \left(\sqrt{R}\rhoest\sqrt{R}\right)_{j,k}= \sum_l \left(\sqrt{R}\right)_{j,l} \lambda_l \left(\sqrt{R}\right)_{l,k},
\end{equation}
where in both cases $\lambda_j$ is the eigenvalue of $\rhoest$ associated with $\ket{j}$. Note that at the true state, $R_{ij}=\delta_{i,j}$

Now, as noted in Section~\ref{sec:numerics}, the iterative Petz often fails when the true state has high purity. This means, at the region close to the true state, some of the eigenvalues of $\rhoest$ are small. Let $\lambda_j \sim \epsilon$ be one of these small eigenvalues. The $j,k$-th entry of the Petz map update will then only update by $O(\sqrt{\epsilon})$. On the other hand, for the mirrored Petz map, the fact that the summation over all eigenvalues is always in the expression means that the $j,k$-th entry will update by $O(|\sqrt{R}|_{jk})$. I.e., the coherence terms of the Petz map get eliminated at region with high purity. This means that, at this regime, the Petz map is unable to rotate the basis of the estimator and is locked in the same basis.

At this point, note that the eigenvalues update by $\lambda_j' = m_j \lambda_j$ for some multiplier $m_j$. Any fixed points need $m_j=1$ or $\lambda_j=0$ for all eigenvalues. At the true state, the multiplier is given by $R_{jj}$, which evaluates to 1 at the true state. However, if the eigenbasis could not be aligned, this is not possible, so the only way for a fixed point to exist is by having $\lambda_j=0$ for the misaligned space, which leads to a rank-deficient state.

Figure~\ref{fig:angle} illustrates this failure. Starting from a point at a fixed trace distance from the true state, the failure rate of the iterative Petz map increases as the principal angle between the \textit{weak} eigenspaces, i.e., corresponding to small eigenvalues, of the initialising estimator and the true state gets higher. On the other hand, the mirrored Petz map has no such problem, converging in every instance.

\begin{figure}
    \centering
    \includegraphics[width=0.5\textwidth]{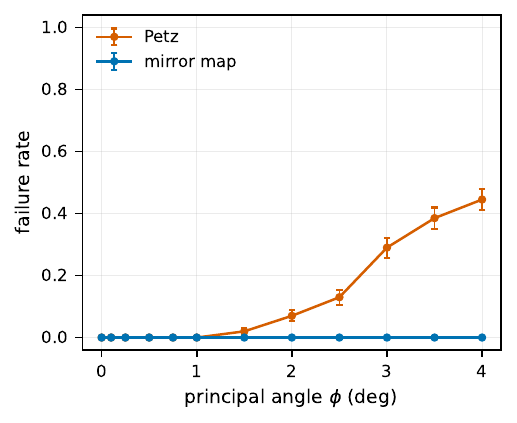}
    \caption{The relation between the failure rate and the principal angle between the weak eigenspaces of the estimator and of the true state at a fixed trace distance of $0.05$ and $d=4$. The condition of failure is identical to Fig.~\ref{fig:capture_rate}. The same 200 true states and POVMs are used at every data point, with the starting states perturbed to the same trace distance but at different angles. $\phi=0$ corresponds to a rotation acting only within the strong and weak blocks separately.}
    \label{fig:angle}
\end{figure}

\end{document}